\documentclass[conference]{IEEEtran}
\usepackage{amssymb,amsthm,amsmath,epsfig,latexsym,graphicx,bm,nccmath}
\usepackage{mathtools}

\usepackage{epstopdf}
\usepackage{cite}
\usepackage{amsfonts}
\usepackage[ruled, lined, linesnumbered, commentsnumbered, longend]{algorithm2e}
\usepackage{textcomp}
\usepackage{xcolor}
\def\BibTeX{{\rm B\kern-.05em{\sc i\kern-.025em b}\kern-.08em
		T\kern-.1667em\lower.7ex\hbox{E}\kern-.125emX}}
\usepackage{caption}	
\usepackage{subcaption}
\usepackage[utf8x]{inputenc}
\usepackage{etoolbox}
\usepackage{booktabs}
\usepackage[top=0.71in,bottom=1in,left=0.625in,right=0.625in]{geometry}

\newtheorem{theorem}{\textbf{Theorem}}

\newtheorem{proposition}{\textbf{Proposition}}

%
%

\usepackage[skip=0pt]{caption}
\usepackage[font=footnotesize,skip=0pt]{subcaption}
\setlength{\textfloatsep}{2pt}
\setlength\abovedisplayskip{2pt}
\setlength\belowdisplayskip{2pt}
\setlength{\dbltextfloatsep}{10pt plus 2pt minus 2pt}  
\setlength{\dblfloatsep}{10pt plus 2pt minus 2pt}

\usepackage{easyReview} 

\setlength{\textfloatsep}{1pt} 
\setlength{\floatsep}{1pt}     
\setlength{\intextsep}{1pt}    

\makeatletter
\patchcmd{\@maketitle}
{\addvspace{0.5\baselineskip}\egroup}
{\addvspace{-1\baselineskip}\egroup}
{}
{}
\makeatother

\begin{document}
	
	\title{Network Slicing Resource Management in Uplink User-Centric Cell-Free Massive MIMO Systems}
	
	\author{\IEEEauthorblockN{Manobendu Sarker\IEEEauthorrefmark{1} and Soumaya Cherkaoui\IEEEauthorrefmark{2}} 
		\IEEEauthorblockA{Department of Computer and Software Engineering, Polytechnique Montreal, Canada\\
			Email: \{manobendu.sarker\IEEEauthorrefmark{1}, soumaya.cherkaoui\IEEEauthorrefmark{2}\}@polymtl.ca}
	}
	
	\maketitle
	
	\begin{abstract}
		This paper addresses the joint optimization of per-user equipment (UE) bandwidth allocation and UE–access point (AP) association to maximize weighted sum-rate while satisfying heterogeneous quality-of-service (QoS) requirements across enhanced mobile broadband (eMBB) and ultra-reliable low-latency communication (URLLC) slices in the uplink of a network slicing-enabled user-centric cell-free (CF) massive multiple-input multiple-output (mMIMO) system. The formulated problem is NP-hard, rendering global optimality computationally intractable. To address this challenge, it is decomposed into two sub-problems, each solved by a computationally efficient heuristic scheme, and jointly optimized through an alternating optimization framework. We then propose (i) a bandwidth allocation scheme that balances UE priority, spectral efficiency, and minimum bandwidth demand under limited resources to ensure fair QoS distribution, and (ii) a priority‑based UE-AP association assignment approach that balances UE service quality with system capacity constraints. Together, these approaches provide a practical and computationally efficient solution for resource‑constrained network slicing scenarios, where QoS feasibility is often violated under dense deployments and limited bandwidth, necessitating graceful degradation and fair QoS preservation rather than solely maximizing the aggregate sum‑rate. Simulation results demonstrate that the proposed scheme achieves up to $52\%$ higher weighted sum-rate, $140\%$ and $58\%$ higher QoS success rates for eMBB and URLLC slices, respectively, while reducing runtime by up to $97\%$ compared to the considered benchmarks. 
	\end{abstract}
	
	\vspace{-3mm}
	
	\section{Introduction}	
	The sixth-generation (6G) wireless systems must support heterogeneous services with stringent quality-of-service (QoS) demands, primarily enhanced mobile broadband (eMBB) (high-rate, reliable) and ultra-reliable low-latency communication (URLLC) (ultra-low latency, high reliability) \cite{10190330}. Network slicing (NS) addresses this by partitioning physical infrastructure into logical slices, each optimized for specific QoS objectives \cite{8320765}. Cell-free (CF) massive multiple-input multiple-output (mMIMO), a promising 6G architecture \cite{Ngo2017}, eliminates cell boundaries by enabling distributed access points (APs) to cooperatively serve user equipments (UEs), achieving superior spectral efficiency (SE), uniform experience, and enhanced reliability through spatial diversity and user-centric connectivity \cite{Demir2021a}. However, integrating NS into CF mMIMO introduces challenges in resource allocation, UE association, and inter-slice isolation \cite{10114604, 10198770}. 
	
	In the CF mMIMO systems, existing research has predominantly focused on downlink (DL) slicing. For instance, \cite{9435371} proposes multi-tenant resource allocation maximizing infrastructure revenue while preserving service quality, though without dynamic link selection critical for scalability. The work in \cite{10114604} introduces a hierarchical slicing framework leveraging multiple time scales to maximize cross-service utility, while \cite{10198770} exploits spatial diversity to meet heterogeneous URLLC and eMBB requirements under imperfect channel state information. However, they do not address resource allocation in infeasible resource-constrained cases. While uplink (UL) slicing is equally important and poses distinct challenges due to diverse UE demands and limited inter-UE coordination, it has received comparatively less attention in the literature. To date, UL slicing has not yet been explored in the CF mMIMO systems.
	
	Designing UL access is inherently more complex than DL due to a lack of UE coordination and different resource priorities. While DL prioritizes bandwidth and throughput for content delivery, UL emphasizes power management and timely transmission for latency-critical applications. Thus, the stringent QoS requirements imposed by emerging UL-intensive applications (autonomous vehicles, telemedicine, augmented/virtual reality) cannot be met by reusing DL slices, necessitating dedicated UL slicing mechanisms for CF mMIMO systems. However, practical high-density deployments with limited bandwidth frequently violate resource allocation feasibility, where total minimum bandwidth demand exceeds available capacity (further discussed in Section \ref{Per-UE Bandwidth Allocation}). In such scenarios, maintaining service continuity for all UEs while ensuring fair QoS distribution becomes critical to prevent complete service outage. This motivates the development of resource allocation strategies that gracefully degrade service quality under scarcity while preserving system-wide fairness and reliability.
	
	In this work, we consider UL slicing in a NS-enabled CF mMIMO framework to support heterogeneous eMBB and URLLC traffic while improving system performance under practical assumptions such as pilot contamination (PC). To the best of our knowledge, the UL slicing problem in the context of the CF mMIMO systems has not yet been addressed in the literature. The main contributions are: (i) we formulate a joint optimization problem for per-UE bandwidth allocation and UE–AP association to maximize weighted sum-rate in NS-enabled UL CF mMIMO systems; due to the NP-hardness of the problem, we propose an alternating-optimization (AO) approach that decomposes it into two tractable sub-problems solved iteratively, (ii) we develop two computationally efficient heuristic algorithms: an efficiency‑oriented bandwidth allocation scheme and a priority‑based UE–AP association scheme, ensuring fair QoS distribution even when feasibility is violated, thereby enabling graceful degradation under resource scarcity, and (iii) extensive numerical results demonstrate that the proposed solution achieves a favorable trade-off between weighted sum-rate and QoS satisfaction while substantially lowering computational cost compared to an interior-point solver-based solution.
	\vspace{-3mm}
	\section{System Model}
	\subsection{Description of Network Configuration}
	
	We consider the UL of a CF mMIMO system with $M$ APs, each equipped with $N$ antennas, and $K$ single-antenna UEs geographically distributed across the service area. All APs are connected to a central processing unit (CPU) via fronthaul links, which coordinates their joint service of UEs grouped into $S=2$ network slices (eMBB and URLLC). Let $\mathcal{M}=\{1,\dots,M\}$, $\mathcal{K}=\{1,\dots,K\}$, and $\mathcal{S}=\{1,\dots,S\}$ denote the sets of APs, UEs, and slices, respectively, and let $\mathcal{K}_s\subseteq\mathcal{K}$ represent the UEs belonging to slice $s\in\{\mathrm{eMBB},\mathrm{URLLC}\}$. The subset of APs serving UE $k$ is denoted by $\mathcal{V}_k$. Each UE $k$ is associated with a single slice $s_k\in\mathcal{S}$ but may be simultaneously served by multiple APs supporting $s_k$. Each UE $k$ is allocated bandwidth $b_k\ge0$, while each slice $s$ has a bandwidth budget $B_s$ satisfying $\sum_s B_s\le B$, where $B$ is the total available bandwidth. Unlike \cite{10198770}, which restricts APs to specific service types, we allow each AP to serve both eMBB and URLLC UEs to preserve spatial diversity and improve load balance. Each AP can serve at most $\tau_p$ (pilot length) UEs concurrently, and every UE $k$ is assigned a positive priority weight $w_k>0$ reflecting heterogeneous QoS levels.
	
	\vspace{-2mm}
	\subsection{Channel Model}
	\label{Channel Model}
	The channel between UE $k \in \mathcal{K}$ and AP $m \in \mathcal{M}$ is modeled as $\mathbf{g}_{k,m} = \sqrt{\beta_{k,m}} \, \mathbf{h}_{k,m}$, where $\mathbf{h}_{k,m} \in \mathbb{C}^{N \times 1}$ denotes the small-scale fading vector, modeled as $\mathcal{CN}(\mathbf{0}, \mathbf{I}_N)$. The large-scale fading coefficient (LSFC) $\beta_{k,m}$ captures both path loss and shadow fading, and is given as $\beta_{k,m} = 10^{\frac{\text{PL}_{k,m}}{10}} \times 10^{\frac{\sigma_{\mathrm{sh}} z_{k,m}}{10}}$, where $\text{PL}_{k,m}$ is the path loss in dB, $\sigma_{\mathrm{sh}}$ is the standard deviation of shadow fading (in dB), and $z_{k,m} \sim \mathcal{N}(0,1)$ models log-normal shadowing \cite{Ngo2017}. The path loss follows a three-slope propagation model \cite{Ngo2017}. It is assumed that the CPU and all APs have perfect knowledge of the LSFCs $\beta_{k,m}$ for all $(k,m)$ pairs.
	
	\subsection{Channel Estimation Model}
	
	The UL transmission consists of two main stages: training and data transmission. In each coherence interval of length $\tau_c$ symbols, $\tau_p$ symbols are allocated for channel estimation and the rest is for data transmission. During UL training, all UEs transmit their assigned pilot sequences, which are received by the APs for channel estimation. As in practical systems, we assume that $K \gg \tau_p$, implying that some UEs reuse the same pilot sequences, leading to PC. The channel is estimated using the minimum mean square error (MMSE) approach and the quality of the estimated channel $\hat{\mathbf{g}}_{k,m}$ is quantified by the mean-square of the estimate: $\gamma_{k,m} \triangleq \mathbb{E} \left[ \| \hat{\mathbf{g}}_{k,m} \|^2 \right] = \sqrt{\tau_p \rho_p \, \eta_k^p \,} \beta_{k,m} \, c_{k,m}$. The MMSE scaling coefficient $c_{k,m}$ is given by 
	\begin{equation}
		\footnotesize
		c_{k,m} \triangleq \frac{\sqrt{\tau_p \rho_p} \, \beta_{k,m} \sqrt{\eta_k^p}}{\tau_p \rho_p \sum\limits_{j \in \mathcal{K}} \beta_{j,m} \eta_j^p |\boldsymbol{\psi}_k^H \boldsymbol{\psi}_j|^2 + 1},
	\end{equation}
	where $\boldsymbol{\psi}_k \in \mathbb{C}^{\tau_p \times 1}$ denotes the unit-norm pilot sequence assigned to UE $k \in \mathcal{K}$, i.e., $\|\boldsymbol{\psi}_k\|^2 = 1$. Here, $\rho_p$ represents the normalized pilot signal-to-noise ratio (SNR), and $\eta_k^p \in (0,1]$ denotes the pilot power control coefficient of UE $k$ \cite{Ngo2017}.  
	\vspace{-2mm}
	\subsection{Uplink Data Transmission Model}
	During UL transmission, each UE $k \in \mathcal{K}$ sends data symbols to its cooperating APs,\footnote{Each AP receives all UEs’ signals but decodes only those from its associated UEs according to the UE-AP association scheme.} which forward them to the CPU for joint decoding using channel estimates. URLLC UEs, having short packets, are modeled using finite block length theory rather than the Shannon capacity \cite{10114604}. The achievable rate expression for both eMBB and URLLC UEs is given at the top of the next page, where $V_k = 1-(1+\text{SINR}_k)^{-2}$ is the channel dispersion, ${Q}^{-1}(\cdot)$ is the inverse Gaussian Q-function, $\theta$ is the decoding error probability, and $L_k$ is the packet size of UE $k$. The closed-form signal-to-interference-plus-noise ratio (SINR) expression for UE $k$ ($\text{SINR}_k$) is written as \eqref{SINR_closed} using similar steps described in \cite{1193803}, where $\eta_{k}^d \in (0,1]$ denotes the data power control coefficient of UE $k$ and $\rho_d$ is the normalized UL data SNR.
	\begin{figure*}[tb]
		\footnotesize
		\begin{align}
			R_{k}=&\left\lbrace \begin{array}{llll} b_{k} \left(1 - \frac{\tau_p}{\tau_c} \right) \log_2 \left(1 + \text{SINR}_{k} \right), & \forall k \in \mathcal{K}_{\mathrm{eMBB}}, \\ b_{k} \left(1 - \frac{\tau_p}{\tau_c} \right) \left[\log_2 \left(1 + \text{SINR}_{k} \right)-\sqrt{\frac{V_k}{L_k}} \frac{{{Q}^{-1}}\left(\theta \right)}{\ln 2} \right], & \forall k \in \mathcal{K}_{\mathrm{URLLC}}. \end{array} \right.
			\label{rate_eq}
		\end{align}
		\vspace{-6pt}
		\noindent\rule{\textwidth}{1pt}
	\end{figure*}
	
	\begin{figure*}[tb]
		\footnotesize
		\begin{align}
			\text{SINR}_k = \frac{N^2\rho_d\eta_k^d\left(\sum_{m \in \mathcal{V}_k} \ \gamma_{k,m}\right)^2}{N\rho_d\sum_{k'=1}^{K}\eta_{k'}^d\sum_{m \in \mathcal{V}_k} \ \gamma_{k,m}\beta_{k',m}+N^2\rho_d \sum_{k'\neq k}^{K}\eta_{k'}^d|\psi_k^H\psi_{k'}|^2\left(\sum_{m \in \mathcal{V}_k} \ \gamma_{k,m} \sqrt{\frac{\eta_{k'}^p}{\eta_k^p}} \frac{\beta_{k',m}}{\beta_{k,m}}\right)^2+N\sum_{m \in \mathcal{V}_k} \ \gamma_{k,m}}.
			\label{SINR_closed}
		\end{align}
		\vspace{-6pt}
		\noindent\rule{\textwidth}{1pt}
	\end{figure*}
	\vspace{-2mm}
	\subsection{Delay Model}
	\label{Delay Model}
	To capture the latency characteristics of delay-sensitive services such as URLLC, we adopt an $M/M/1$ queuing model as used in \cite{10772596} to characterize the UL queuing and transmission delay experienced by each UE. For UE $k \in \mathcal{K}_{\mathrm{URLLC}}$, let $\lambda_{k}$ denote the average packet arrival rate (in packets per second). The service rate of UE $k$ is $\mu_{k} = R_{k} / L_{k}$ (packet size in bytes). So, under the stability condition $\mu_{k} > \lambda_{k}$ \cite{kleinrock1975queueing}, the total delay experienced by UE $k$, denoted by $D_{k}$, comprises both queuing and transmission delays and is expressed as: $D_{k} = \frac{1}{\mu_{k} - \lambda_{k}} = \frac{1}{\frac{R_{k}}{L_{k}} - \lambda_{k}}, \quad \forall k \in \mathcal{K}_{\mathrm{URLLC}}$.
	
	To satisfy the QoS requirements of latency-critical applications, the total delay must not exceed a predefined maximum tolerable latency threshold $D_{k}^{\mathrm{max}}$. This requirement imposes the following constraint on the achievable UL rate: $\frac{R_{k}}{L_{k}} \geq \lambda_{k} + \frac{1}{D_{k}^{\mathrm{max}}}, \quad \forall k \in \mathcal{K}_{\mathrm{URLLC}}$. This constraint ensures that the combined queuing and transmission delays remain within acceptable bounds for each UE of $\mathcal{K}_{\mathrm{URLLC}}$, thereby preserving the reliability and responsiveness required by mission-critical services.
	\vspace{-3mm}
	\section{Problem Formulation}
	Our objective is to jointly optimize the per-UE bandwidth allocation and UE-AP association assignment for maximizing the weighted sum-rate performance under fixed slice-wise bandwidth $B_s, \ s \in \mathcal{S}$, while satisfying service-specific QoS requirements. The weights $w_k$ reflect heterogeneous service priorities of eMBB and URLLC UEs.
	
	To incorporate UE-AP association as a variable, we define an association matrix $\mathbf{A} \in \{0,1\}^{K \times M}$, where $a_{k,m} = 1$ indicates association between AP $m$ and UE $k$, and $a_{k,m} = 0$ otherwise. Thus, $\mathcal{V}_k = \{m: a_{k,m} = 1\}$ with cardinality $|\mathcal{V}_k| = \sum_{m=1}^M a_{k,m}$. Using these definitions, the SINR expression in (\ref{SINR_closed}) is reformulated as (\ref{SINR_closed_up}), replacing the summations over subsets $\mathcal{V}_k$ with terms involving $\mathbf{A}$.
	
	\begin{figure*}[tb]
		\footnotesize
		\begin{align}
			\text{SINR}_k = \frac{N\rho_d\eta_k^d\left(\sum_{m = 1}^M \ a_{k,m}\gamma_{k,m}\right)^2}{\rho_d\sum_{k'=1}^{K}\eta_{k'}^d\sum_{m=1}^M a_{k,m}\gamma_{k,m}\beta_{k',m}+N\rho_d \sum_{k'\neq k}^{K}\eta_{k'}^d|\psi_k^H\psi_{k'}|^2\left(\sum_{m = 1}^M  a_{k,m}\gamma_{k,m} \sqrt{\frac{\eta_{k'}^p}{\eta_k^p}} \frac{\beta_{k',m}}{\beta_{k,m}}\right)^2+\sum_{m  = 1}^M  a_{k,m}\gamma_{k,m}}.
			\label{SINR_closed_up}
		\end{align}
		\vspace{-6pt}
		\noindent\rule{\textwidth}{1pt}
	\end{figure*}
	
	The joint optimization problem can be expressed as follows:
	\begin{subequations}
		\footnotesize
		\begin{alignat}{2}
			\textbf{P0:}\quad 
			\max_{\substack{\mathbf{A} \in \{0,1\},\\ \{b_k\} \geq 0}} \quad &
			\ \sum_{k \in \mathcal{K}} w_k R_{k}, 
			\label{eq:objective_unified} \\
			\text{s.t.}\quad 
			& \sum_{m \in \mathcal{M}} a_{k,m} \geq 1, 
			&& \forall k \in \mathcal{K},  
			\label{c:ue_connectivity_unified} \\
			& \sum_{k \in \mathcal{K}} a_{k,m} \leq \tau_p, 
			&& \forall m \in \mathcal{M}, 
			\label{c:ap_user_limit_unified} \\
			& \sum_{k \in \mathcal{K}_s} b_{k} \leq B_s, 
			&& \forall s \in \mathcal{S}, 
			\label{c:slice_bandwidth_unified} \\
			& \frac{R_{k}}{L_{k}} \geq \lambda_{k} + \frac{1}{D_k^{\max}}, 
			&& \forall k \in \mathcal{K}_{\mathrm{URLLC}}, 
			\label{c:delay_unified} \\
			& R_{k} \geq R_k^{\min}, 
			&& \forall k \in \mathcal{K}_{\mathrm{eMBB}}. 
			\label{c:min_rate_unified}
		\end{alignat}
	\end{subequations}
	
	\noindent
	The objective function \eqref{eq:objective_unified} maximizes the weighted sum-rate of all UEs. In \textbf{P0}, constraint~\eqref{c:ue_connectivity_unified} guarantees full UL coverage by ensuring each UE is associated with at least one AP. Constraint~\eqref{c:ap_user_limit_unified} limits each AP to at most \( \tau_p \) UEs, capturing hardware, scheduling, and pilot reuse constraints~\cite{Bjornson2019}. Constraint~\eqref{c:slice_bandwidth_unified} enforces the slice-wise bandwidth budget \( B_s \), while~\eqref{c:delay_unified} incorporates a queuing-based delay model (see Section~\ref{Delay Model}) for satisfying URLLC latency requirement $D_k^{\max}$. Finally,~\eqref{c:min_rate_unified} ensures eMBB UEs meet the minimum rate requirement \( R_k^{\min} \) for reliable communication. 
	
	\begin{theorem}
		\label{thm:problem_classification}
		\textbf{P0} is a non-convex mixed-integer nonlinear programming (MINLP) problem that is NP-hard.
	\end{theorem}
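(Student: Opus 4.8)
The plan is to establish the three assertions of the statement---mixed-integer structure, non-convexity, and NP-hardness---in that order, devoting the bulk of the effort to the last. First, for the MINLP claim I would read off the variable types and functional forms directly from \textbf{P0}: the association entries $a_{k,m}$ are binary while the bandwidths $b_k$ are continuous, so the program genuinely mixes integer and continuous decisions. The objective \eqref{eq:objective_unified} and the QoS constraints \eqref{c:delay_unified}--\eqref{c:min_rate_unified} are all built from $R_k$, which through \eqref{rate_eq} and \eqref{SINR_closed_up} contains a logarithm of a ratio of quadratic forms in $\mathbf{A}$ (plus, for URLLC, a square-root channel-dispersion term), hence is nonlinear. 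This suffices to classify \textbf{P0} as a mixed-integer nonlinear program.

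Second, for non-convexity I would argue that even the continuous relaxation (replacing $a_{k,m}\in\{0,1\}$ by $a_{k,m}\in[0,1]$) fails to be a convex program, so that non-convexity is intrinsic and not merely a consequence of integrality. Two independent features suffice: (i) the maximization objective \eqref{eq:objective_unified} is not concave, since each summand is the bilinear product $b_k\,(1-\tau_p/\tau_c)\log_2(1+\mathrm{SINR}_k)$ of the variable $b_k$ with a function of $\mathbf{A}$ that, being $\log$ of the ratio of quadratic forms in \eqref{SINR_closed_up}, is itself non-concave; and (ii) the QoS constraints \eqref{c:delay_unified}--\eqref{c:min_rate_unified} are super-level sets $\{R_k\ge c\}$ of the same non-concave $R_k$ and therefore carve out a non-convex feasible region. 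Exhibiting an indefinite Hessian at one interior point, or a single two-point segment along which the relevant Jensen inequality is violated, makes either claim rigorous.

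Third, and most importantly, for NP-hardness I would exhibit a polynomial-time reduction from a canonical NP-hard problem to a special case of \textbf{P0}. I would restrict \textbf{P0} to a single eMBB slice, set $R_k^{\min}=0$ so the rate/delay thresholds become inert, and fix every bandwidth to a common feasible value $b$; the continuous part then contributes only a constant factor, and what remains is the purely discrete task of choosing, for each UE, a serving set $\mathcal{V}_k=\{m:a_{k,m}=1\}$---subject to coverage \eqref{c:ue_connectivity_unified} and the per-AP capacity \eqref{c:ap_user_limit_unified}---so as to maximize $\sum_{k}w_k\log_2(1+\mathrm{SINR}_k(\mathbf{A}))$. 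Because each UE's payoff is a saturating (log-of-collected-signal) function of the \emph{subset} of APs it selects, while the APs are a capacitated resource shared across all UEs, this is exactly a capacitated welfare-/assignment-maximization structure; I would map an arbitrary instance of a known NP-hard combinatorial allocation problem (for concreteness, the generalized assignment problem) onto it by constructing, in polynomial time and of polynomial size, the large-scale fading coefficients $\beta_{k,m}$, pilot correlations $|\boldsymbol{\psi}_k^H\boldsymbol{\psi}_j|^2$, and weights $w_k$, and then argue that optimal solutions of the two instances correspond.

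The hard part will be the \emph{realizability} of this gadget. Unlike an abstract assignment problem, the payoff attached to an association in \textbf{P0} is not a free parameter but is pinned down by the fixed analytic form of \eqref{SINR_closed_up}, in which every $\beta_{k,m}$ must be a strictly positive physical coefficient and each serving AP injects coupled signal, interference, and pilot-contamination cross terms simultaneously. The delicate step is therefore to engineer physically admissible parameters so that the induced rates cleanly separate the high- and low-profit alternatives of the source instance, and in particular to dominate the interference and contamination cross terms so they cannot blur this separation; once that separation is guaranteed, an optimal solver for the restricted \textbf{P0} would decide the source instance, yielding NP-hardness of \textbf{P0}.
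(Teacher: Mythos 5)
Your proposal follows essentially the same route as the paper's own proof: classify \textbf{P0} as an MINLP by inspecting the binary $a_{k,m}$ and continuous $b_k$ variables, attribute non-convexity to the bilinear $b_k\log_2(1+\mathrm{SINR}_k(\mathbf{A}))$ structure and the ratio-of-quadratics SINR, and establish NP-hardness by fixing the bandwidths, stripping the QoS constraints, and reducing from the generalized assignment problem. The only substantive difference is that you explicitly flag the realizability of the GAP gadget (engineering $\beta_{k,m}$ so the induced log-rates encode arbitrary profits while interference and pilot-contamination cross terms stay negligible) as the step requiring care, whereas the paper disposes of it with the one-line assumption of negligible inter-UE interference; your version is the more honest account of where the work lies.
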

		
	\begin{proof}
		See in Appendix \ref{Proof of Theorem 1}.
	\end{proof}
	\vspace{-3mm}	
	\section{Proposed Solution}
	To circumvent the NP-hardness in \textbf{P0}, we propose a decoupling approach where the original problem is divided into two sub-problems. Each sub-problem focuses on addressing a specific aspect, such as per-UE bandwidth allocation and UE-AP association assignment.
	
	\subsection{Per-UE Bandwidth Allocation}
	\label{Per-UE Bandwidth Allocation}
	With fixed association matrix $\mathbf{A}$, the per-UE bandwidth allocation $\{b_k\}, \ {k \in \mathcal{K}}$ is formulated as follows:
	\begin{subequations}	
		\footnotesize
		\begin{alignat}{2}
			\textbf{P1:}\quad 
			\max_{\{b_k\} \geq 0} \quad &
			\ \sum_{k \in \mathcal{K}} w_k R_{k},  \label{P1_obj}\\
			\text{s.t.}\quad &  \eqref{c:slice_bandwidth_unified}\text{--}\eqref{c:min_rate_unified}.
		\end{alignat}
	\end{subequations}
	
	\begin{proposition}
		\label{thm:convexity}
		\textbf{P1} is convex since (i) $ R_{k}$ is concave in $b_k$ for constant $\text{SINR}_k > 0$, and (ii) constraints \eqref{c:slice_bandwidth_unified}\text{--}\eqref{c:min_rate_unified} are linear in $b_k$.
	\end{proposition}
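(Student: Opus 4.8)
The plan is to exploit the fact that fixing the association matrix $\mathbf{A}$ renders $\text{SINR}_k$ in \eqref{SINR_closed_up} a constant, since that expression depends only on $\mathbf{A}$ together with the (known) large-scale and channel-estimation coefficients, but not on the bandwidth variables $\{b_k\}$. First I would substitute this observation into the rate expression \eqref{rate_eq}. For $k \in \mathcal{K}_{\mathrm{eMBB}}$, the term $b_k(1-\frac{\tau_p}{\tau_c})\log_2(1+\text{SINR}_k)$ is a constant multiple of $b_k$. For $k \in \mathcal{K}_{\mathrm{URLLC}}$, the bracketed term carries the finite-blocklength penalty $\sqrt{V_k/L_k}\,Q^{-1}(\theta)/\ln 2$; the key point is that the channel dispersion $V_k = 1-(1+\text{SINR}_k)^{-2}$ is itself constant once $\text{SINR}_k$ is fixed, so the entire bracket reduces to a constant and $R_k$ is again a constant multiple of $b_k$. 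Hence in both cases $R_k = \kappa_k b_k$ for some constant prefactor $\kappa_k$, which is an affine (indeed linear) function of $b_k$. Since every linear function is concave, this establishes claim (i).

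Next I would verify claim (ii) by inspecting the three constraints. The slice-budget constraint \eqref{c:slice_bandwidth_unified} is a sum of the $b_k$ and is manifestly linear. For the delay constraint \eqref{c:delay_unified} and the minimum-rate constraint \eqref{c:min_rate_unified}, I would substitute the linear form $R_k = \kappa_k b_k$ derived above to rewrite them as $\kappa_k b_k / L_k \geq \lambda_k + 1/D_k^{\max}$ and $\kappa_k b_k \geq R_k^{\min}$, respectively; both are affine in $b_k$ and therefore each defines a half-space. Together with the nonnegativity constraints $b_k \geq 0$, the feasible region is an intersection of finitely many half-spaces, i.e. a polyhedron, and is therefore convex.

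Finally I would assemble the conclusion. The objective \eqref{P1_obj} is $\sum_{k \in \mathcal{K}} w_k R_k$ with $w_k > 0$; as a nonnegative weighted sum of affine (hence concave) functions, it is concave in $\{b_k\}$. Maximizing a concave objective over a convex (polyhedral) feasible set is by definition a convex optimization problem, which proves the proposition.

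The only genuinely subtle step, and thus the main point to state cleanly, is recognizing that the seemingly nonlinear URLLC rate---with its channel-dispersion and inverse Gaussian $Q$-function terms---collapses to a linear function of $b_k$; once the constancy of $\text{SINR}_k$ (and hence of $V_k$) under fixed $\mathbf{A}$ is made explicit, the remaining arguments are routine. It is worth noting that the result does not even require nonemptiness of the feasible set, since convexity is a structural property of the program rather than a statement about its solvability.
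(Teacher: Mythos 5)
Your proposal is correct and follows essentially the same reasoning the paper relies on: with $\mathbf{A}$ fixed, $\mathrm{SINR}_k$ (and hence $V_k$) is a constant, so each $R_k$ is linear in $b_k$, the objective is a nonnegative weighted sum of linear functions, and all constraints reduce to affine inequalities, yielding a convex (in fact linear) program. The paper gives no separate proof beyond the assertion in the proposition itself, and your expansion --- including the observation that the URLLC finite-blocklength penalty collapses into the constant prefactor --- fills in exactly the intended argument.
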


	\begin{proposition}
		\label{prop:feasibility}
		For a given association matrix $\mathbf{A}$, \textbf{P1} is feasible if and only if, 
		\begin{equation}
			\footnotesize
			\sum_{k \in \mathcal{K}_{\mathrm{URLLC}}} 
			\frac{L_k \left(\lambda_k + \tfrac{1}{D_k^{\max}}\right)}
			{\mathrm{SE}_k^{\mathrm{URLLC}}}
			+
			\sum_{k \in \mathcal{K}_{\mathrm{eMBB}}} 
			\frac{R_k^{\min}}{\mathrm{SE}_k^{\mathrm{eMBB}}}
			\leq B_s,
			\label{eq:feasibility_condition_final}
		\end{equation}
		where $\mathrm{SE}_k^{\mathrm{eMBB}} = \left(1 - \tfrac{\tau_p}{\tau_c} \right)\log_2\left(1 + \mathrm{SINR}_{k}\right)$ and 
		$\mathrm{SE}_k^{\mathrm{URLLC}} = \left(1 - \tfrac{\tau_p}{\tau_c} \right)
		\left[\log_2 \left(1 + \mathrm{SINR}_{k} \right)
		- \sqrt{\tfrac{V_k}{L_k}} \tfrac{Q^{-1}(\theta)}{\ln 2} \right]$, 
		provided that $\mathrm{SE}_k^{\mathrm{URLLC}} > 0$ for all $k \in \mathcal{K}_{\mathrm{URLLC}}$.
	\end{proposition}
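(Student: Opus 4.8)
The plan is to leverage a single structural observation that collapses the whole problem: once the association matrix $\mathbf{A}$ is fixed, the SINR in \eqref{SINR_closed_up} depends only on $\mathbf{A}$, the power-control coefficients, and the channel statistics, and is therefore independent of the bandwidth variables $\{b_k\}$. Consequently each quantity $\mathrm{SE}_k^{\mathrm{eMBB}}$ and $\mathrm{SE}_k^{\mathrm{URLLC}}$ is a fixed nonnegative constant, and from \eqref{rate_eq} the achievable rate becomes affine (indeed linear) in the scalar $b_k$: $R_k = b_k\,\mathrm{SE}_k^{\mathrm{eMBB}}$ for $k \in \mathcal{K}_{\mathrm{eMBB}}$ and $R_k = b_k\,\mathrm{SE}_k^{\mathrm{URLLC}}$ for $k \in \mathcal{K}_{\mathrm{URLLC}}$. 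This linearity is precisely what turns the nonlinear QoS constraints into elementary bandwidth thresholds, and it is the fact I would establish first.

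Next I would translate each service-specific constraint into a per-UE minimum bandwidth. For an eMBB UE, \eqref{c:min_rate_unified} reads $b_k\,\mathrm{SE}_k^{\mathrm{eMBB}} \ge R_k^{\min}$, i.e. $b_k \ge b_k^{\min} \triangleq R_k^{\min}/\mathrm{SE}_k^{\mathrm{eMBB}}$; for a URLLC UE, \eqref{c:delay_unified} reads $b_k\,\mathrm{SE}_k^{\mathrm{URLLC}}/L_k \ge \lambda_k + 1/D_k^{\max}$, i.e. $b_k \ge b_k^{\min} \triangleq L_k\!\left(\lambda_k + 1/D_k^{\max}\right)/\mathrm{SE}_k^{\mathrm{URLLC}}$. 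Dividing by $\mathrm{SE}_k^{\mathrm{URLLC}}$ preserves the inequality direction only because of the standing hypothesis $\mathrm{SE}_k^{\mathrm{URLLC}} > 0$, which is exactly why it appears as an explicit precondition: if $\mathrm{SE}_k^{\mathrm{URLLC}} \le 0$ for some URLLC UE, then $R_k \le 0$ for every $b_k \ge 0$ while the right-hand side $\lambda_k + 1/D_k^{\max} > 0$, so \eqref{c:delay_unified} is unsatisfiable and the problem is trivially infeasible. After this step the feasible set of \textbf{P1} is exactly the intersection of the per-UE lower bounds $\{b_k \ge b_k^{\min}\}$ with the per-slice budget caps \eqref{c:slice_bandwidth_unified}.

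With the feasible set written as this intersection, the equivalence follows by a standard extremal argument applied separately to each slice $s$. For sufficiency, if the aggregate minimum demand of slice $s$ does not exceed $B_s$, then setting $b_k = b_k^{\min}$ for every UE in the slice simultaneously meets all lower bounds and satisfies $\sum_{k \in \mathcal{K}_s} b_k = \sum_{k \in \mathcal{K}_s} b_k^{\min} \le B_s$, exhibiting an explicit feasible allocation. For necessity I would argue by contraposition: any feasible $\{b_k\}$ must obey $b_k \ge b_k^{\min}$ for all $k \in \mathcal{K}_s$, so summing over the slice gives $B_s \ge \sum_{k \in \mathcal{K}_s} b_k \ge \sum_{k \in \mathcal{K}_s} b_k^{\min}$, which is the stated condition. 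Chaining the two directions yields the ``if and only if,'' with the two summations in \eqref{eq:feasibility_condition_final} corresponding to the URLLC and eMBB thresholds accumulated within their respective slices.

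I do not anticipate a genuine technical obstacle, since the linearization reduces feasibility to a purely affine question whose answer is dictated by comparing total minimum demand against supply. The only points requiring care are bookkeeping both directions of the equivalence cleanly and isolating the degenerate $\mathrm{SE}_k^{\mathrm{URLLC}} \le 0$ regime, which is deliberately excluded by hypothesis rather than folded into the inequality; everything else is routine.
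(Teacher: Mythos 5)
Your proof is correct and follows exactly the argument the paper intends (the paper itself states Proposition~\ref{prop:feasibility} without proof): with $\mathbf{A}$ fixed, $\mathrm{SINR}_k$ and hence $\mathrm{SE}_k$ are constants, $R_k$ is linear in $b_k$, each QoS constraint becomes a per-UE threshold $b_k \ge b_k^{\min}$, and feasibility reduces to comparing aggregate minimum demand with the slice budget via the standard extremal/summation argument. One point in your favor rather than against: the paper's displayed condition \eqref{eq:feasibility_condition_final} loosely writes a single inequality against one symbol $B_s$ while summing over both slices, whereas your per-slice treatment ($\sum_{k\in\mathcal{K}_s} b_k^{\min}\le B_s$ for each $s\in\mathcal{S}$) is the correct reading given that constraint \eqref{c:slice_bandwidth_unified} is imposed separately per slice; you also rightly isolate the degenerate case $\mathrm{SE}_k^{\mathrm{URLLC}}\le 0$ that the hypothesis excludes.
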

	
	\textbf{Remark.} Proposition~\ref{prop:feasibility} may be violated when: (i) poor channel conditions cause excessive per-UE bandwidth demands exceeding the slice budget, or (ii) high UE density pushes aggregate minimum bandwidth beyond \(B_s\). Standard convex solvers (e.g., CVX) then declare infeasibility without yielding a feasible allocation. To address this infeasibility, we selectively relax constraints for weak-channel UEs and prioritize high-value UEs through weighted objectives to maintain numerical stability and ensure fair QoS reduction while maximizing system utility. Our proposed heuristic solutions for these scenarios are presented in the following subsection.
	
	\subsubsection{Proposed Per-UE Bandwidth Allocation Scheme}
	\label{Proposed Per-UE Bandwidth Allocation Scheme}
	To secure fair QoS distribution even during the feasibility (Proposition \ref{prop:feasibility}) violation scenario, we develop a computationally efficient per-UE bandwidth allocation scheme operating in three stages: (i) priority-based allocation of URLLC UEs, (ii) best-effort allocation of eMBB UEs, and (iii) proportional distribution of residual bandwidth. The procedure is summarized in Algorithm~\ref{alg:greedy_knapsack}. The key idea is to allocate UEs based on a \emph{bandwidth efficiency} metric, defined as $\zeta_k = \tfrac{w_k \mathrm{SE}_k}{b_k^{\min}}$, which balances UE priority, SE, and minimum bandwidth demand $b_k^{\min}$ under limited resources, ensuring fair QoS distribution.
	
	In \textbf{Stage~1 (Lines~\ref{alg:urllc_sort}--\ref{alg:urllc_end})}, URLLC UEs are processed first due to their stringent delay constraints. These UEs are sorted in descending order of $\zeta_k$ and their bandwidths are allocated greedily as long as their minimum bandwidth requirements fit within the remaining URLLC slice budget. This prioritizes high-efficiency URLLC UEs, enhancing the weighted sum-rate while satisfying reliability constraints. In \textbf{Stage~2 (Lines~10--\ref{alg:embb_end})}, the same bandwidth allocation process is applied to eMBB UEs within their slice budget. Since eMBB traffic is elastic, UEs that cannot be allocated simply receive zero bandwidth, implementing a best-effort service model. In \textbf{Stage~3 (Lines~17--\ref{alg:extra_end})}, any residual slice bandwidth is proportionally distributed among allocated UEs again according to $\zeta_k$, ensuring full resource utilization while preserving the efficiency-based fairness established in the allocation phases.
	
	\subsubsection{Computational Complexity Analysis}
	\label{CC:BW}
	The computational complexity of Algorithm~\ref{alg:greedy_knapsack} is primarily determined by the sorting operations in Lines~\ref{alg:urllc_sort} and~\ref{alg:embb_sort}. Sorting $|\mathcal{K}_{\text{URLLC}}|$ and $|\mathcal{K}_{\text{eMBB}}|$ UEs requires $\mathcal{O}(K \log K)$ time. Algorithm~\ref{alg:greedy_knapsack} processes URLLC and eMBB UEs in the first two stages, followed by bandwidth allocation for any remaining resources in the final stage. The admission loops (Lines~\ref{alg:urllc_loop}--\ref{alg:embb_end}) and the residual bandwidth distribution (Lines~\ref{alg:extra_start}--\ref{alg:extra_end}) each operate in $\mathcal{O}(K)$ time. Consequently, the overall computational complexity of the algorithm is $\mathcal{O}(K \log K)$.
	
	\begin{algorithm}[t]
		\caption{Proposed Per-UE Bandwidth Allocation}
		\label{alg:greedy_knapsack}
		\footnotesize
		\SetAlgoLined
		\KwIn{$\{w_k, \text{SE}_k, b_k^{\min}\}$, $\{B_s\}$, $\mathcal{K}_{\mathrm{URLLC}}, \mathcal{K}_{\mathrm{eMBB}}$}
		\KwOut{$\{b_k\}$}
		
		\textbf{Initialize:} $b_k \gets 0, \; \mathcal{A}_s \gets \emptyset, \; B_s^{\text{rem}} \gets B_s, \; \forall s$ \;
		
		$\zeta_k \gets w_k \mathrm{SE}_k / b_k^{\min}, \; \forall k$ \label{alg:efficiency}\;
		
		\label{alg:urllc_start}
		Sort $\mathcal{K}_{\text{URLLC}}$ by $\zeta_k$ (descending) $\to \pi_{\text{URLLC}}$ \label{alg:urllc_sort}\;
		\ForEach{$k \in \pi_{\text{URLLC}}$}{\label{alg:urllc_loop}
			\If{$B_{s_{\text{URLLC}}}^{\text{rem}} \geq b_k^{\min}$}{\label{alg:urllc_check}
				$b_k \gets b_k^{\min}, \; \mathcal{A}_s \gets \mathcal{A} \cup \{k\}$ \; 
				$B_{s_{\text{URLLC}}}^{\text{rem}} \gets B_{s_{\text{URLLC}}}^{\text{rem}} - b_k^{\min}$ \label{alg:urllc_admit}\;
			}
		}
		\label{alg:urllc_end}
		
		\label{alg:embb_start}
		Sort $\mathcal{K}_{\text{eMBB}}$ by $\zeta_k$ (descending) $\to \pi_{\text{eMBB}}$ \label{alg:embb_sort}\;
		\ForEach{$k \in \pi_{\text{eMBB}}$}{
			\If{$B_{s_{\text{eMBB}}}^{\text{rem}} \geq b_k^{\min}$}{
				$b_k \gets b_k^{\min}, \; \mathcal{A}_s \gets \mathcal{A} \cup \{k\}$ \; 
				$B_{s_{\text{eMBB}}}^{\text{rem}} \gets B_{s_{\text{eMBB}}}^{\text{rem}} - b_k^{\min}$ \;
			}
		}
		\label{alg:embb_end}
		
		\label{alg:extra_start}
		\ForEach{$s \in \mathcal{S}$}{\label{alg:extra_loop}
			$\mathcal{A}_s \gets \{k \in \mathcal{A} : s_k = s\}$ \;
			\If{$B_s^{\text{rem}} > 0 \land |\mathcal{A}_s| > 0$}{
				$\zeta_s^{\text{total}} \gets \sum_{k \in \mathcal{A}_s} \zeta_k$ \;
				$b_k \gets b_k + B_s^{\text{rem}} \cdot \zeta_k / \zeta_s^{\text{total}}, \; \forall k \in \mathcal{A}_s$ \label{alg:extra_prop}\;
			}
		}
		\label{alg:extra_end}
	\end{algorithm}
	
	
	\subsection{UE-AP Association Assignment}
	
	Given fixed per-UE bandwidth allocations ${b_k}$, the UE-AP association assignment sub-problem is formulated as follows:
	\begin{subequations}
		\small
		\begin{alignat}{2}
			\textbf{P2:}\quad 
			\max_{\mathbf{A} \in \{0,1\}} \quad &
			\ \sum_{k \in \mathcal{K}} w_k R_{k}, \label{P2_obj}\\
			\text{s.t.}\quad &  \eqref{c:ue_connectivity_unified}\text{--}\eqref{c:ap_user_limit_unified}.
		\end{alignat}
	\end{subequations}
	
	Since SINR depends on $\mathbf{A}$ through complex channel combining and interference coupling, \textbf{P2} is a combinatorial optimization problem that is an MINLP problem. Due to the significant computational burden associated with solving MINLP problems optimally (i.e., $\mathcal{O}(2^{KM})$), a heuristic procedure is developed as described below.
	
	\subsubsection{Proposed UE-AP Association Assignment Scheme}
	\label{Proposed UE-AP Association Assignment Scheme}
	Our proposed scheme constructs the association matrix $\mathbf{A}$ using a priority-based assignment approach that balances UE service quality with system capacity constraints, as outlined in Algorithm~\ref{alg:greedy_association_simple}. To this end, the algorithm identifies the associations that contribute most to the overall objective~\eqref{P2_obj} using a metric called the \textit{association potential}, defined for each UE-AP pair $(k,m)$ as $\Xi_{k,m} = w_k \cdot b_k \cdot \beta_{k,m}$ (Line~\ref{alg:assoc_simple_metric}). UEs are then prioritized in descending order according to $\Pi_k = w_k b_k$ (Lines~\ref{alg:assoc_simple_priority}--\ref{alg:assoc_simple_sort}), ensuring that high-priority UEs (those with larger weights and bandwidth allocations) are served first and have access to their most favorable APs. Next, for each UE $k$, APs are sorted by $\Xi_{k,m}$ (Line~\ref{alg:assoc_simple_ap_sort}), and the algorithm assigns the UE to the AP $m$ with the highest potential, provided the current AP load $\ell_m$ does not exceed the capacity limit $\tau_p$ (Line~\ref{alg:assoc_simple_capacity_check}), thereby satisfying constraint~\eqref{c:ap_user_limit_unified}. If no AP with available capacity can accommodate UE $k$, Algorithm~\ref{alg:greedy_association_simple} performs an {additional assignment} (Lines~\ref{alg:assoc_simple_emergency}--\ref{alg:assoc_simple_force}) to connect UE $k$ to its best available AP $m^*$, thereby guaranteeing minimum connectivity and enforcing constraint~\eqref{c:ue_connectivity_unified}. 
	
	\subsubsection{Computational Complexity Analysis}
	\label{CC:Association}
	The computational complexity of Algorithm~\ref{alg:greedy_association_simple} is dominated by the AP sorting operations in the assignment loop (Lines~9--21). However, computing the association potentials $\Xi_{k,m}$ (Line~\ref{alg:assoc_simple_metric}) requires $\mathcal{O}(KM)$ operations, sorting UEs by priority (Lines~\ref{alg:assoc_simple_priority}--\ref{alg:assoc_simple_sort}) requires $\mathcal{O}(K \log K)$, and sorting $M$ APs for each of the $K$ UEs in the assignment loop incurs $\mathcal{O}(KM \log M)$. Therefore, the overall computational complexity is $\mathcal{O}(KM + K \log K + KM \log M) = \mathcal{O}(KM \log M)$.
	
	\begin{algorithm}[t]
		\caption{Proposed UE-AP Association Assignment Scheme}
		\label{alg:greedy_association_simple}
		\footnotesize
		\SetAlgoLined
		\KwIn{$\{\beta_{k,m}\}$, $\{w_k\}$,  $\{b_k\}$,  $\tau_p$, $\mathcal{K}$, $\mathcal{M}$}
		\KwOut{$\mathbf{A}$}
		\textbf{Initialize:} $\mathbf{A} \gets \mathbf{0}_{K \times M}, \quad \ell_m \gets 0, \; \forall m \in \mathcal{M}$
		
		\label{alg:assoc_simple_potential_start}
		\ForEach{$k \in \mathcal{K}$}{\label{alg:assoc_simple_user_loop}
			\ForEach{$m \in \mathcal{M}$}{\label{alg:assoc_simple_ap_loop}
				$\Xi_{k,m} \gets w_k \cdot b_k \cdot \beta_{k,m}$ \label{alg:assoc_simple_metric} 
			}
		}
		\label{alg:assoc_simple_potential_end}
		
		\label{alg:assoc_simple_priority_start}
		$\Pi_k \gets w_k \cdot b_k, \quad \forall k \in \mathcal{K}$ \label{alg:assoc_simple_priority}\;
		$\mathcal{K}_{\text{sorted}} \gets \text{argsort}(\{\Pi_k\}, \text{descending})$ \label{alg:assoc_simple_sort}\;
		\label{alg:assoc_simple_priority_end}
		
		\label{alg:assoc_simple_init}
		
		\label{alg:assoc_simple_greedy_start}
		\ForEach{$k \in \mathcal{K}_{\text{sorted}}$}{\label{alg:assoc_simple_outer_loop}
			$\mathcal{M}_k^{\text{sorted}} \gets \text{argsort}(\{\Xi_{k,m}\}_{m=1}^M, \text{descending})$\; \label{alg:assoc_simple_ap_sort} 
			$n_{\text{assigned}} \gets 0$ \;
			
			\ForEach{$m \in \mathcal{M}_k^{\text{sorted}}$}{\label{alg:assoc_simple_inner_loop}
				\If{$\ell_m < \tau_p$}{\label{alg:assoc_simple_capacity_check}
					$a_{k,m} \gets 1$ \label{alg:assoc_simple_assign}\;
					$\ell_m \gets \ell_m + 1$; $n_{\text{assigned}} \gets n_{\text{assigned}} + 1$ \;

				}
			}
			
			\If{$n_{\text{assigned}} = 0$}{\label{alg:assoc_simple_emergency}
				$m^* \gets \mathcal{M}_k^{\text{sorted}}(1)$; 
				$a_{k,m^*} \gets 1, \; \ell_{m^*} \gets \ell_{m^*} + 1$; \label{alg:assoc_simple_force} 
			}
		}
		\label{alg:assoc_simple_greedy_end}
	\end{algorithm}
	
	\subsection{The Overall Solution to the Problem \textbf{P0}}
	\label{The Overall Solution to the Problem P0}
	
	To solve the coupled optimization problem \textbf{P0}, we adopt an AO framework that iteratively solves the two sub-problems until convergence. Starting from an initial association matrix $\mathbf{A}^{(0)}$, where each UE connects to the AP with the highest channel strength $\{\beta\}$, the framework alternates between: (i) solving \textbf{P1} for per-UE bandwidth allocation ${b_k}$ (Algorithm~\ref{alg:greedy_knapsack}) with fixed $\mathbf{A}$, and (ii) solving \textbf{P2} for association matrix $\mathbf{A}$ (Algorithm~\ref{alg:greedy_association_simple}) with fixed ${b_k}$. The process stops when the objective~\eqref{eq:objective_unified} improvement is below $\epsilon$ or the iteration count reaches $I_{\max}$. Convergence is guaranteed as each iteration monotonically improves a bounded objective, typically within 3--5 iterations. The per-iteration complexity is $\mathcal{O}(K \log K + KM \log M)$, as discussed in Sections~\ref{CC:BW} and~\ref{CC:Association}, ensuring scalability and computational efficiency.
	\vspace{-2mm}	
	\section{Numerical Analysis}
	\label{Numerical Analysis}
	
	The proposed algorithms are evaluated in a UL CF mMIMO system deployed over a $1 \times 1$ km$^2$ area with $M=100$ APs (each having $N = 4$ antennas) serving $K$ single-antenna UEs. Both APs and UEs are uniformly distributed, and the wrap-around technique \cite{Bjornson2017} is applied to eliminate boundary effects. Following \cite{Ngo2017}, we set $\tau_p = 10$, $\tau_c= 200$, $\rho_p=\rho_d= 100~\text{mW/Np}$, and adopt log-normal shadow fading with standard deviation $\sigma_{sh}= 8$ dB. To induce PC effects more, pilots are randomly assigned to UEs  and their power is controlled using \cite{Sark2301:Pilot}, while UL data transmission employs open-loop power control. The maximum transmit power for both pilot and data signals is set to 100 mW.
	
	To meet eMBB and URLLC QoS demands, the total bandwidth $B = 80$ MHz is evenly split among the slices ($B_s = 40$ MHz each). The UE population includes 40$\%$ eMBB and 60$\%$ URLLC UEs. URLLC traffic follows dynamic arrivals with packet sizes $L_k \in [20,120]$ bytes, rates $\lambda_k \in [5,25]$ packets/s, delay limits $D_k^{\max} \in [0.5,2.5]$ ms, and priority weights $w_k \in [2,4]$, with decoding error probability $\theta = 10^{-5}$. eMBB UEs include 30$\%$ premium ($R_k^{\min} \in [5,10]$ Mbps, $w_k=1.5$) and 70$\%$ standard ($R_k^{\min} \in [1,3]$ Mbps, $w_k=1.0$). Finally, the AO algorithm stops when the objective improvement is below convergence threshold $\epsilon = 0.001$ or iteration limit $I_{\max}=15$ is reached.   
	
	\subsubsection{Comparison of Average Weighted Sum-rate Performance}
	\label{Comparison of Average Weighted Sum-rate Performance}
	
	Fig.~\ref{Average_sumrate} compares the average weighted sum-rate of the proposed scheme (`Proposed') with a `Hybrid' scheme combining CVX-based per-UE bandwidth allocation ($\mathcal{O}(K^3)$~\cite{boyd2004convex}) and UE-AP association from~\cite{sarker2023access} ($\mathcal{O}(KM)$), and a `Baseline' scheme using the same association with round-robin bandwidth allocation. However, CVX can fail due to Proposition~\ref{prop:feasibility} violation, then the Hybrid scheme applies a fallback allocation ($\mathcal{O}(K)$) that greedily distributes $B_s$ among UEs, ensuring graceful degradation under infeasible QoS conditions. All schemes, except the Baseline, employ the AO framework (described in Section~\ref{The Overall Solution to the Problem P0}) to solve \textbf{P0}.
	
	As expected, the Hybrid scheme achieves the highest weighted sum-rate, outperforming the Proposed scheme by approximately $23\%$, owing to its use of the optimal CVX solver for bandwidth allocation (with greedy fallback in infeasible cases). However, this performance gain comes at the expense of significantly higher computational complexity ($\mathcal{O}(K^3 + KM)$ compared to $\mathcal{O}(K \log K + KM \log M)$) and the need for numerical solvers, making the Proposed scheme more suitable for resource-constrained or real-time deployments. However, the Proposed scheme outperforms the Baseline by up to $52\%$. This performance gain is attributed to: (i) the proposed bandwidth allocation (Algorithm~\ref{alg:greedy_knapsack}), which leverages the metric $\zeta_{k}$ capturing UE priority, channel quality, and resource efficiency, and redistributes leftover bandwidth to improve adaptability even when QoS requirements cannot be met, and (ii) the priority-based UE-AP association (Algorithm~\ref{alg:greedy_association_simple}) based on the association potential metric $\Xi_{k,m}$, which complements the per-UE bandwidth allocation. At higher UE densities, the weighted sum-rate performance of all schemes diminishes due to increased inter-UE interference and bandwidth constraints.
	
	\begin{figure}[tb]	
		\centering
		\includegraphics[scale=0.2]{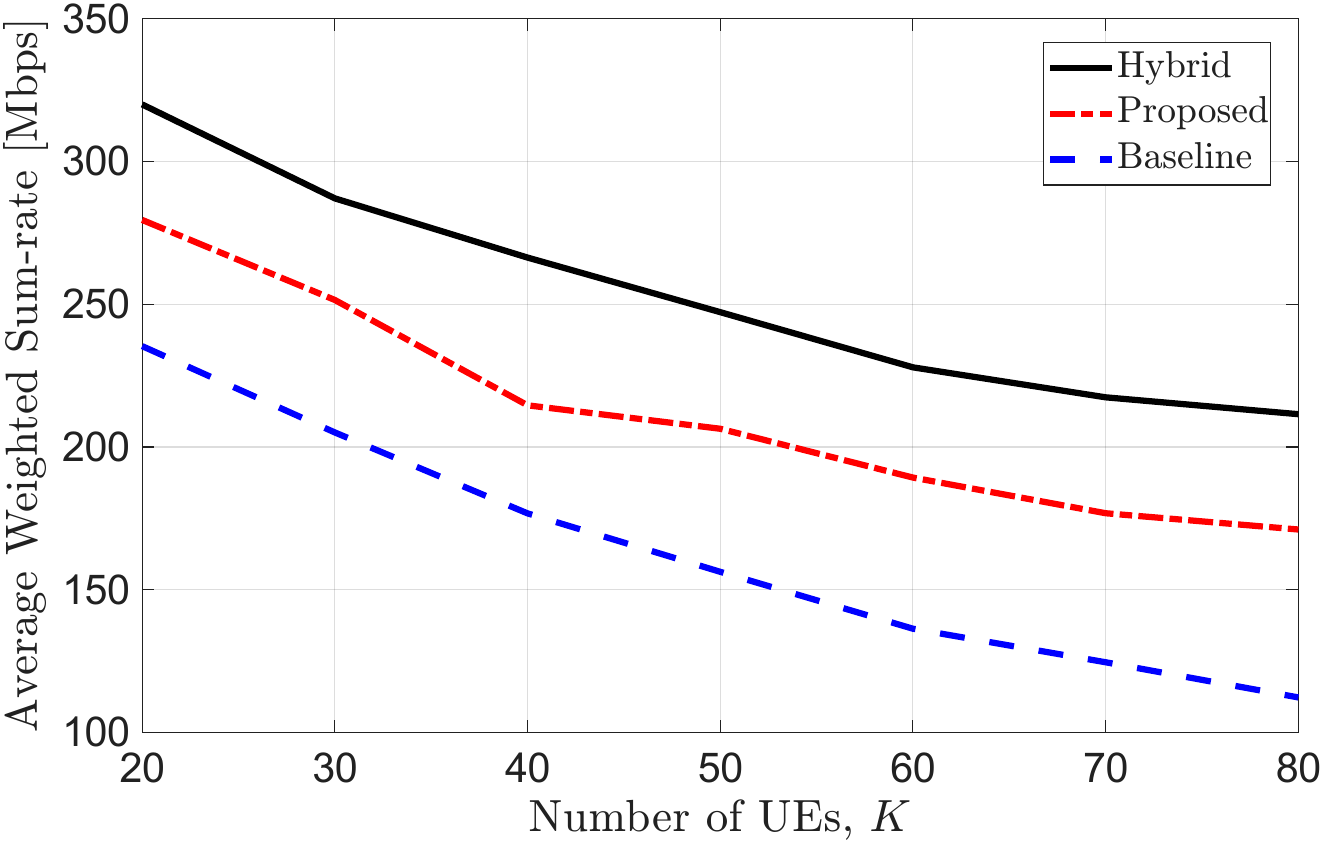}
		\caption{Average weighted sum-rate performance of different schemes for varying UE $K$ with $ \tau_p = 10 $ and $M = 100$ APs.}
		\label{Average_sumrate}
	\end{figure}
	
	\vspace{-1mm}
	\subsubsection{Comparison of Average Success Rate Performance}
	\label{Comparison of Average Success rate Performance}
	
	Figs.~\ref{Success_embb} and~\ref{Success_urllc} illustrate the average success rate, i.e., the fraction of UEs satisfying their respective QoS constraints: minimum-rate requirement \eqref{c:min_rate_unified} for eMBB and delay constraint \eqref{c:delay_unified} for URLLC UEs. In case of eMBB UEs (Fig.~\ref{Success_embb}), the Proposed scheme achieves $140\%$ and $115\%$ higher success rates than Baseline and Hybrid; for URLLC UEs (Fig.~\ref{Success_urllc}), gains are $58\%$ and $43\%$, respectively. Interestingly, while the Hybrid scheme delivers higher weighted sum-rate (Fig.~\ref{Average_sumrate}), the Proposed scheme excels in QoS satisfaction, demonstrating an effective trade-off. The Hybrid scheme maximizes the aggregate weighted objective by concentrating resources on high-weight UEs, potentially under-serving others below their QoS thresholds. Conversely, the Proposed scheme embeds QoS requirements directly into the $\zeta_k$ and $\Xi_{k,m}$ metrics (Algorithms~\ref{alg:greedy_knapsack} and~\ref{alg:greedy_association_simple}, respectively), factoring in both constraint satisfaction and sum-rate maximization during allocation. This yields a more balanced resource distribution, making the Proposed scheme  preferable when high success rates are critical despite modest sum-rate sacrifice. These URLLC success rates could be further improved by augmenting the framework with priority-based admission control, granting URLLC UEs preferential admission over eMBB UEs when resources are scarce. Such integration would enable the proposed bandwidth allocation to redistribute resources more effectively, enhancing URLLC reliability while preserving inter-slice fairness. Finally, examining the constraint characteristics, the substantially higher eMBB success rates reflect constraint stringency differences: the heterogeneity in eMBB rate requirements (for premium- and standard-tiers UEs) allows more flexibility in resource redistribution, whereas URLLC's uniformly stringent delay constraints offer limited room for adaptation.     
	
	\begin{figure}[tb]	
		\centering
		\includegraphics[scale=0.2]{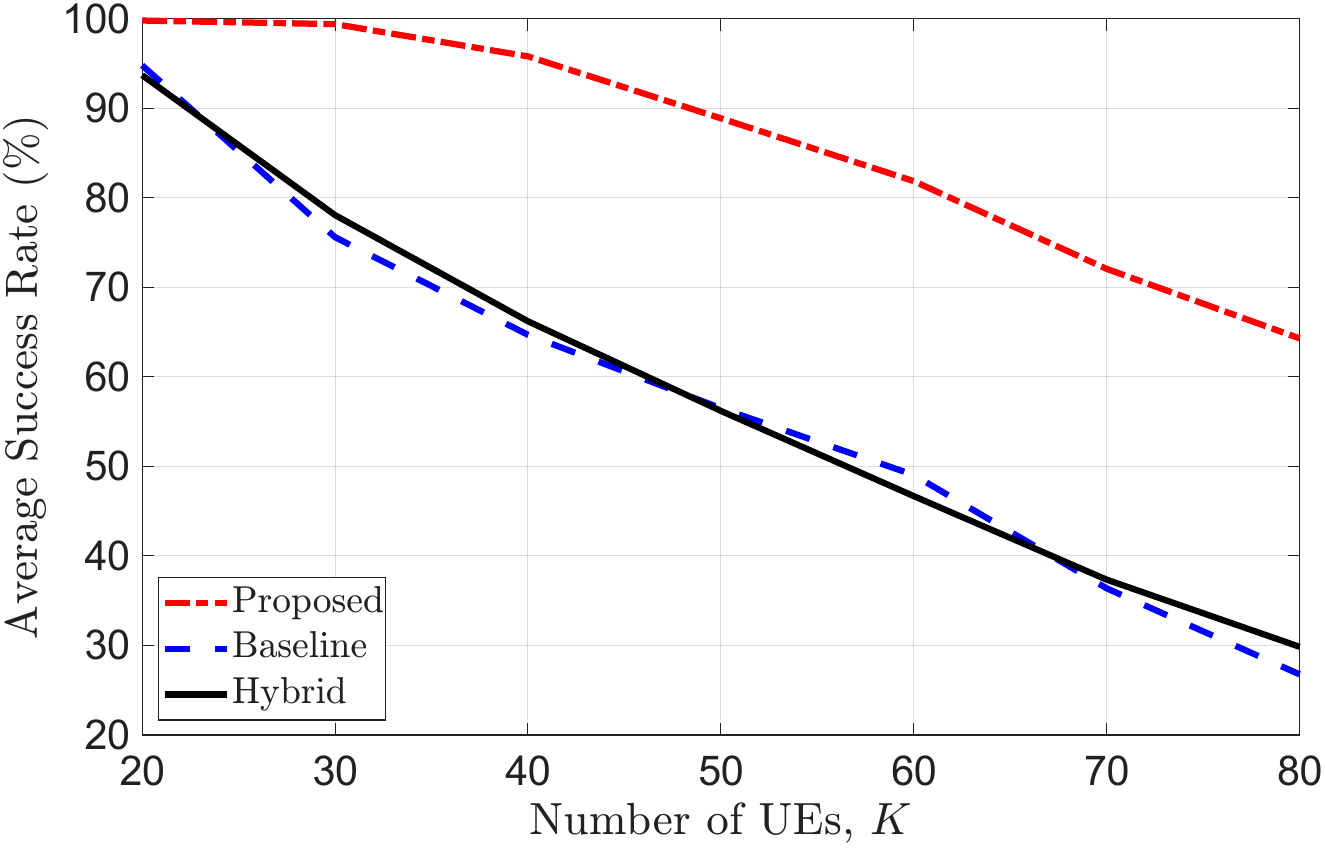}
		\caption{Average success rate performance of different schemes for eMBB UEs with $ \tau_p = 10 $ and $M = 100$ APs.}
		\label{Success_embb}
	\end{figure}
	
	\begin{figure}[tb]	
		\centering
		\includegraphics[scale=0.2]{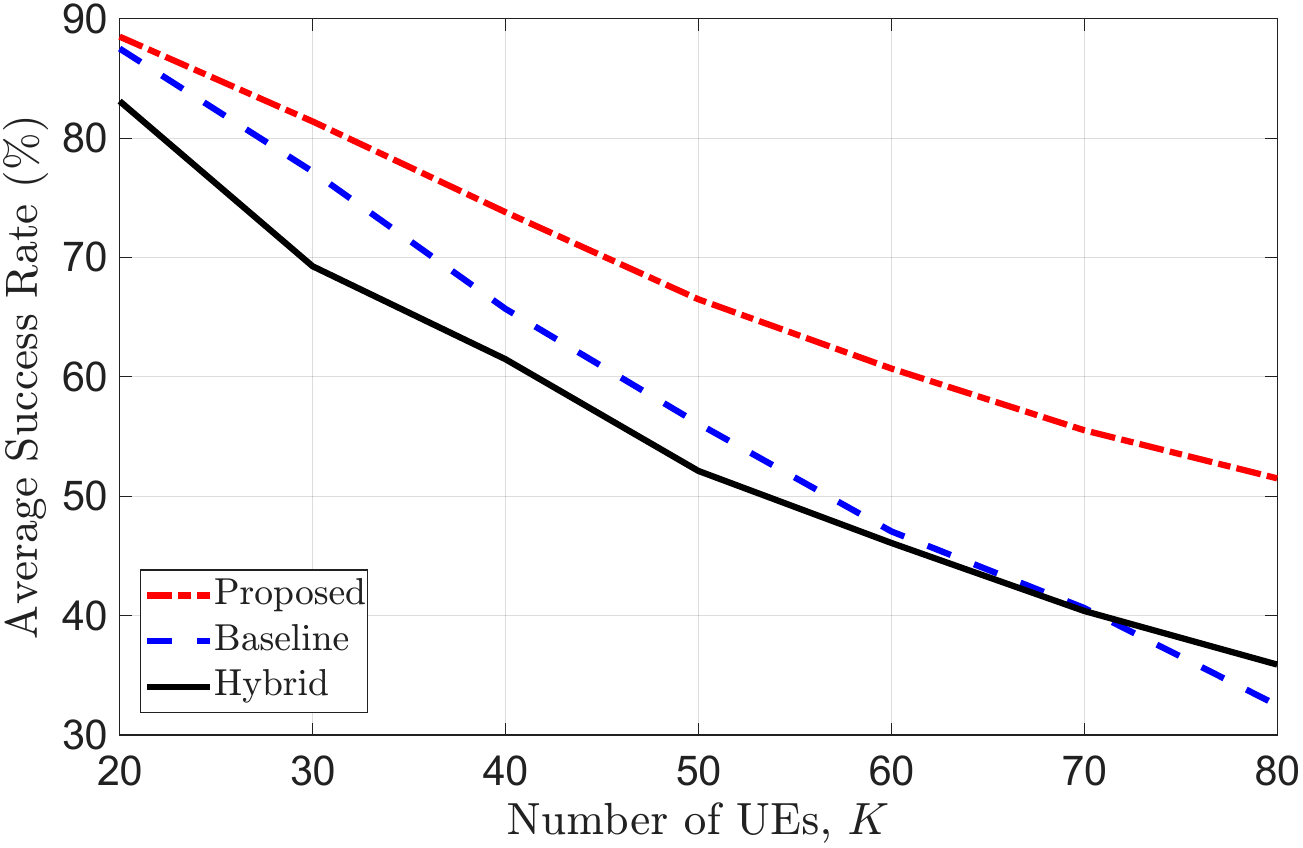}
		\caption{Average success rate performance of different schemes for URLLC UEs with $ \tau_p = 10 $ and $M = 100$ APs.}
		\label{Success_urllc}
	\end{figure}
	
	\subsubsection{Comparison of Computational Complexity Performance}
	\label{Comparison of Computational Complexity Performance}
	
	Fig.~\ref{runtime} presents the average runtime comparison between the Hybrid and Proposed schemes. As discussed in Section~\ref{Comparison of Average Weighted Sum-rate Performance}, the Proposed scheme exhibits a substantially lower computational complexity, reduced by up to two orders in $K$ compared to the Hybrid scheme. When the CVX solver operates successfully, the Proposed scheme achieves up to $97\%$ lower runtime for lower UE densities. However, at higher UE densities, both schemes exhibit nearly identical runtimes, implying that the Hybrid scheme’s CVX optimization is bypassed due to feasibility violations under Proposition~\ref{prop:feasibility}. In contrast, the Proposed scheme maintains consistent low complexity across all loads. Despite its lightweight computation, the Proposed scheme maintains superior QoS satisfaction and only a modest loss in weighted sum-rate compared to the Hybrid scheme. This highlights the Proposed scheme's overall computational efficiency and practicality for QoS-critical, resource-constrained large-scale deployments.	
	\begin{figure}[tb]	
		\centering
		\includegraphics[scale=0.2]{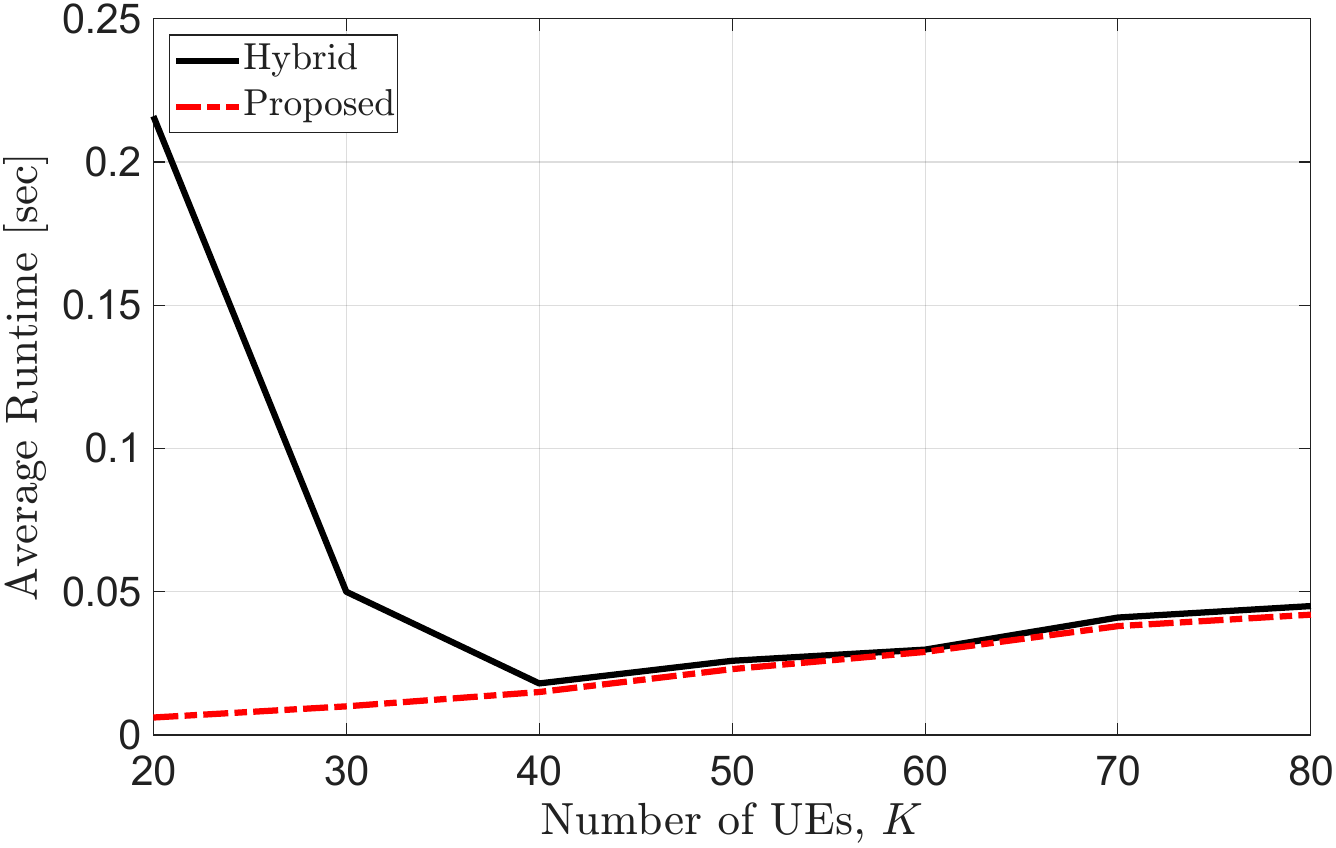}
		\caption{Average runtime of Hybrid and Proposed schemes for varying UE $K$ with $ \tau_p = 10 $ and $M = 100$ APs.}
		\label{runtime}
	\end{figure}
	\vspace{-2mm}	
	\section{Conclusion}
	In this paper, we present a computationally efficient framework for jointly optimizing per-UE bandwidth allocation and UE–AP association assignment in the UL of a NS-enabled CF mMIMO system, aiming to maximize the weighted sum-rate while satisfying heterogeneous QoS requirements for eMBB and URLLC UEs. Owing to the NP-hard nature of the formulated problem, it is decomposed into two tractable sub-problems, each addressed by an efficient heuristic scheme designed to ensure fair QoS distribution even under minimum-bandwidth constraint violation scenarios. Numerical results demonstrate that the proposed scheme achieves a favorable balance between computational efficiency, weighted sum-rate performance, and QoS satisfaction, making it a robust choice for QoS-critical network deployments. While this work focuses on maintaining service for all UEs through fair QoS distribution under resource scarcity, future extensions could explore admission control strategies to further enhance strict QoS guarantees for latency-critical applications.
	
	\vspace{-2mm}
	\appendix
	\section{Appendix A}
	\subsection{Proof of Theorem \ref{thm:problem_classification}}
	\label{Proof of Theorem 1}
	Problem \textbf{P0} contains $KM$ binary association variables $a_{k,m} \in \{0,1\}$ and $K$ continuous bandwidth allocation variables $b_k \in \mathbb{R}_+$, so it is a mixed-integer program. Nonconvexity follows from the objective’s dependence on the quadratic/bilinear terms (e.g., $(\sum_m a_{k,m}\gamma_{k,m})^2$ and $R_k=\left(1 - \frac{\tau_p}{\tau_c} \right)b_k\log_2(1+\mathrm{SINR}_k(\mathbf{A}))$), together with the discrete feasible set $\{0,1\}^{KM}$.
	
	For hardness, consider a restricted instance where bandwidths $b_k$ are fixed and QoS constraints are removed. Further, choose channel parameters (or assume negligible inter-UE interference) so that each $\mathrm{SINR}_k$ depends only on the AP assigned to UE $k$; then the objective reduces to maximizing $\sum_{k} w_k \log_2(1+\mathrm{SINR}_k(\mathbf{A}))$ which depends only on the assignment variables. Under this construction \textbf{P0} subsumes the generalized assignment problem (GAP): assigning items (UEs) to bins (APs) with capacity constraints and profits, which is NP-hard \cite{martello1990knapsack}. Hence, by reduction from GAP, \textbf{P0} is NP-hard.
	
	\vspace{-2mm}
	\bibliography{Jan31_2022}
	\bibliographystyle{ieeetr}
	
\end{document}